\pgfplotsset{compat=1.9,legend style ={font=\footnotesize}, width=7cm, axis lines = left, yminorticks=false}
\newcommand{\T}{^{\rm {T}}}
\newcommand{\E}[1]{\mathbb{E}\left\{ #1 \right\}}
\newcommand{\mc}[1]{\mathcal{#1}}
\newcommand{\mb}[1]{\mathbf{#1}}
\newcommand{\code}[1]{\texttt{#1}}
\DeclareMathOperator*{\argmax}{arg\,max}
\newtheorem{proposition}{Proposition}
\title{NOMA Power Minimization of Downlink Spectrum Slicing for eMBB and URLLC Users} 
\author{\IEEEauthorblockN{Fabio Saggese}
\IEEEauthorblockA{\textit{Dept. of Information Engineering} \\
\textit{University of Pisa}\\
Pisa, Italy \\
\texttt{fabio.saggese@phd.unipi.it}}
\and
\IEEEauthorblockN{Marco Moretti}
\IEEEauthorblockA{\textit{Dept. of Information Engineering} \\
\textit{University of Pisa}\\
Pisa, Italy \\
\code{marco.moretti@unipi.it}}
\and
\IEEEauthorblockN{Petar Popovski}
\IEEEauthorblockA{\textit{Dept. of Electronic Systems} \\
\textit{Aalborg University}\\
Aalborg, Denmark \\
\code{petarp@es.aau.dk}}
}
\date{}
\begin{document}
\maketitle
\begin{abstract}
    \emph{Spectrum slicing} of the shared radio resources is a critical task in 5G networks with heterogeneous services, through which each service gets performance guarantees. In this paper, we consider a setup in which a Base Station (BS) should serve two types of traffic in the downlink, enhanced mobile broadband (eMBB) and ultra-reliable low-latency communication (URLLC), respectively. Two resource allocation strategies are compared: non-orthogonal multiple access (NOMA) and orthogonal multiple access (OMA). 
    A framework for power minimization is presented, in which the BS knows the channel state information (CSI) of the eMBB users only. Nevertheless, due to the resource sharing, it is shown that this knowledge can be used also to the benefit of the URLLC users. The numerical results show that NOMA leads to a lower power consumption compared to OMA for every simulation parameter under test.
\end{abstract}
\begin{IEEEkeywords}
NOMA, RAN slicing, eMBB, URLLC, Power saving
\end{IEEEkeywords}

\section{Introduction}
\label{sec:intro}
5G technology is conceived as a connectivity platform that is capable to support flexibly a plethora of services with heterogeneous requirements. To address the complexity of such a vast connectivity space, 5G has opted to define three generic service types: enhanced mobile broadband (eMBB), massive machine-type communications (mMTC), and ultra-reliable low-latency communications (URLLC). These generic types should not be seen as exclusive to a device or a service; there will be composite services that use any combination of the aforementioned generic service types. 

In this context, \emph{spectrum slicing} is a fundamental task in a 5G Radio Access Network (RAN): how to utilize a given chunk of spectrum to serve users with heterogeneous requirements. This terminology is motivated by the more general concept of \emph{network slicing}: partitioning the physical network infrastructure in different end-to-end isolated virtual networks, i.e. \emph{slices}, each one able to support a different service requirement for different use cases~\cite{Zhang2017}. In that sense, we can also refer to spectrum slicing as RAN slicing. 

The general problem of resource allocation for network slicing is discussed in~\cite{Doro2019}.  Several prior works treated the problem of resource allocation algorithms to multiplex eMBB and URLLC services. In~\cite{Anand2020}, the joint resource allocation problem for eMBB-URLLC slicing is addressed employing different puncturing models. The authors of~\cite{Elsayed2019} and~\cite{Alsenwi2021} propose different deep reinforcement learning techniques to allocate the resource to the two services, employing orthogonal resources and pre-emption/puncturing, respectively. All these works consider the standard application of dynamic resources sharing between eMBB and URLLC, i.e. by means of puncturing or non-overlapping time/frequency resources~\cite{3gpp:access}.
However, in certain cases power domain non-orthogonal multiple access (NOMA) has shown the potential to outperform orthogonal multiple access (OMA)~\cite{Xu2015}. Spectrum slicing can be realized either by OMA or by NOMA, as introduced in the framework presented in~\cite{Popovski2018}\footnote{Note that the terms OMA and NOMA are commonly used to denote the sharing of wireless resources among services with the same type of requirement, e.g., rate maximization. In~\cite{Popovski2018} the terms H-OMA (heterogeneous OMA) and H-NOMA (heterogeneous NOMA) are used to emphasize the heterogeneous requirements; here we will use NOMA and OMA without any risk of confusion.}. The use of NOMA is also been investigated in~\cite{Kalor2019} for URLLC devices with different latency requirements.
In~\cite{Li2019}, a reinforcement learning algorithm decides whenever to use OMA or NOMA for dynamic multiplexing of eMBB-URLLC data streams, setting the transmission power based on the information about the channel gain. 

\begin{figure}[t!]
    \centering
    \includegraphics[width=8cm]{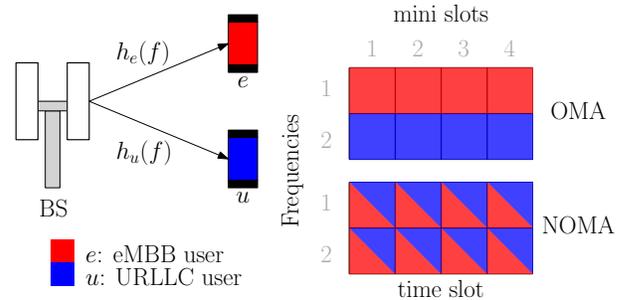}
    \caption{An elementary  example of the system model with available frequencies $F = 2$ and available mini-slots $M = 4$.}
    \label{fig:toy-example}
\end{figure}

In this paper, we employ NOMA to enforce spectrum slicing for downlink transmissions of eMBB and URLLC traffic to minimize the transmitted power. The complexity of the model arises due to the different characteristics of the two traffic types.  The eMBB service aims to maximize the throughput, while its latency requirements are negligible. 
Differently from this, the URLLC service demands mission-critical, reliable communication, where a hard latency constraint must be fulfilled. Accurate channel state information (CSI) is not available for URLLC traffic, due to its intermittent nature and the stringent latency constraints, while it can be acquired for the eMBB device. Accordingly, we assume that instantaneous CSI is known at the BS only for the eMBB devices, while for the URLLC devices only statistical CSI is available. The different levels of knowledge of CSI impact the resource allocation, but it is not obvious to draw clear conclusions. Interestingly, it still turns out that, in non-orthogonal schemes, the URLLC users can benefit from the fact that the BS has a CSI of the eMBB users.  

In particular, the problem we address here is exemplified in Fig.~\ref{fig:toy-example}, which shows two users, one for each service, transmitting over a limited set of resources. Since the methods discussed here can be easily generalized to more crowded scenarios, our work is preliminary to more general analysis. Even in this simple scenario, the allocation problem depends on various parameters, which greatly complicate the solution of the slicing problem.
Possessing only statistical CSI, the performances of the URLLC are formulated by means of the outage probability. Unfortunately, the closed-form expression of outage probability for transmissions on parallel channels is not known in the general case, while its approximations involve very complicated non-convex special functions~\cite{Bai2013, Coon2015, Li2020}. Accordingly, after discussing in-depth the slicing problem, we propose a novel power allocation algorithm based on a pragmatical approach to compute the outage probability for URLLC users.

Eventually, numerical results show that our NOMA approach outperforms the more conventional OMA.

\section{System Model}
\label{sec:model}
We consider a downlink communication scenario where the base station (BS) transmits eMBB and URLLC traffics. To capture the nature of the dynamics of the slicing problem, our study is focused on a simplified scenario with only two users: an eMBB user  $e$  and a URLLC user $u$. 

Following a 5G-like model, the available resources are organized in a  time and frequency grid. The time axis is organized in time slots, each divided into $M$ mini-slots collected in the set $\mc{M}$. Following the 5G NR standard~\cite{3gpp:access}, the eMBB traffic is allocated on a time slot basis and URLLC traffic on a mini-slot basis. 
We assume that channel coherence time is always greater than the duration of a single time slot.
In the frequency domain, there are $F$ orthogonal frequency resources (FR) collected in the set $\mc{F}$.
We further denote as $\mc{F}_e \subseteq \mc{F}$ and $\mc{M}_e \subseteq \mc{M}$ the sets of frequency and temporal resources reserved to user $e$, and with $\mc{F}_u \subseteq \mc{F}$ and $\mc{M}_u \subseteq \mc{M}$ the sets of resources reserved to user $u$. The cardinality of these sets is $F_e, M_e, F_u, M_u$, respectively.
While in a standard OMA configuration the resources are mutually exclusive, employing the NOMA paradigm the resource grid is shared between URLLC and eMBB, i.e.  $\mc{F}_u \times \mc{M}_u = \mc{F}_e \times \mc{M}_e$.

The system performance is computed as a function of the overall transmitted power. The eMBB user is modeled as transmitting with an overall spectral efficiency of $r_e$ [bit/s/Hz]  with the minimum possible transmitted power.
The requirements for the URLLC user are formulated in terms of latency and reliability. In particular, we assume that a packet with spectral efficiency $r_u$ [bit/Hz/s] must be delivered within $T_u$ seconds with an outage probability lower than $\epsilon_u$.
The URLLC latency constraints are expressed as a function of the \emph{edge delay}, i.e. the delay between the time at which the message arrives at the transmitter and the time at which the message is effectively transmitted, assuming that all other delay terms have already subtracted from $T_u$~\cite{She2017}. Without loss of generality, we define the tolerable latency in terms of the number of mini-slots. Hence, each packet should be correctly decoded by the receiver within latency $l^{\max}$.
Only the mean large-scale fading gain $\Gamma_u$ is known at the transmitter.

\subsection{Signal model}
\label{sec:signalmodel}
We denote as $n$ the number of symbols contained in a mini-slot. 
Due to the latency requirements, the length of each URLLC codeword is assumed to be equal to $n$, so each mini-slot contains a different codeword. On the other hand, a codeword of $e$ may span  multiple mini-slots, e.g. $L \le M$, assuming that its duration is exactly a multiple of $n$.
Let us consider a single resource $(t, f)$, i.e. a single mini-slot $t\in\mc{M}$, and a single FR $f \in \mc{F}$.
Therefore, the vectors of transmitted symbols in a resource are denoted as $\mb{s}_u(t, f) = [s_u(1, f), \dots, s_u(n, f)]\T$ for $u$, and $\mb{s}_e(t, f) = [s_e((t - 1)n + 1, f), \dots, s_e(t \, n, f)]\T$ for $e$. We assume that the symbol vectors chosen follows $\E{||\mb{s}_i(t, f)||^2} = 1$, $i\in\{e,u\}$, $\E{\mb{s}_e^{\rm H}(t, f) \mb{s}_u(t, f)} = 0$.
The BS transmits both eMBB and URLLC data streams using superposition coding. The transmitted signal in a single resource $(t,f)$ is:
\begin{equation}
\begin{aligned}
\mb{x}(t, f) &= \sqrt{P_e(t,f)} \mb{s}_e(t, f) + \sqrt{P_u(t,f)}  \mb{s}_u(t, f)
\end{aligned}
\end{equation}
where $P_e(t,f)$, $P_u(t,f)$ are the power coefficients used to transmit the symbols of $e$ and $u$ on resource $(t,f)$, respectively.
It is worth noting that this formalization implies the possibility of the transmission of the data stream of a single user $i \in \{u,e\}$ in an OMA fashion, by imposing the other coefficient equal to 0.

On the receiver side, we can model the signal received by user $i \in \{e,u\}$ as
\begin{equation}
\begin{aligned}
\mb{y}_i(t, f) = h_i(f) \mb{x}(t,f) + \mb{n}_i
\end{aligned}
\end{equation}
where $h_i(f)$, $i \in \{e,u\}$, is the the channel gain, and specifically $h_e(f)$ is assumed known, while $h_u(f) \sim \mc{CN}(0, \Gamma_u)$; $n_i \sim \mc{N}(0, \sigma_i^2 \mb{I}_{n})$, $i \in \{e,u\}$, is the noise at the receiver.
We denote the instantaneous normalized channel gain at the receiver as $i\in\{e,u\}$ as 
\begin{equation}
    \gamma_i(f) = \frac{|h_i(f)|^2}{\sigma_i^2},
\end{equation}
where it is worth noting that $\gamma_e(f)$ is a deterministic value and it is known, while $\gamma_u(f) \sim \text{Exp}(\rho_u)$, where $\rho_u = \E{\gamma_u(f)} = \Gamma_u/ \sigma_u^2$ is the normalized average channel gain.

Employing NOMA, each user may implement successive interference cancellation (SIC) to cancel the data stream of the other user and recover its data stream with virtually no error. However, the cancellation of a user data stream depends on the reception of the whole codeword, but the eMBB codewords may span over multiple mini-slots. Thus, if the URLLC user waits for the reception of an entire $e$ codeword, it risks violating its latency requirement. Therefore, we concentrate on a NOMA communication paradigm where SIC is employed only by the eMBB user, regardless of the quality of its channel.

\subsection{Mutual information}
Having considered a quasi-static fading channel, the channel dispersion is zero, and the description of the maximum available rate for short packets may be done in terms of outage capacity~\cite{Polyanskiy2010, Yang2014, Durisi2016}.
Accordingly, the mutual information of $u$ data stream at receiver $u$ is denoted as~\cite{Li2020}
\begin{equation} \label{eq:info:u}
    I_u = \sum_{t\in\mc{M}_u} \sum_{f\in\mc{F}_u} \log_2\left( 1 + \frac{\gamma_u(f) P_u(t, f)}{1 + \gamma_u(f) P_e(t, f)}\right),
\end{equation}
the mutual information of $u$ data stream at receiver $e$ is denoted as
\begin{equation} \label{eq:info:ue}
    I_{u,e} = \sum_{t\in\mc{M}_u} \sum_{f\in\mc{F}_u} \log_2\left( 1 + \frac{\gamma_e(f) P_u(t,f)}{1 + \gamma_e(f) P_e(t,f)}\right),
\end{equation}
and the mutual information of $e$ data stream at receiver $e$ after the SIC process is
\begin{equation} \label{eq:info:e}
    I_e = \sum_{t\in\mc{M}_e} \sum_{f\in\mc{F}_e} \log_2\left( 1 + \gamma_e(f) P_e(t,f)\right).
\end{equation}

\subsection{Outage events}
\label{sec:outages}
Following the mutual information formalization, we can define the outage events of both users.

\paragraph*{eMBB outages}
The data stream transmitted to $e$ is incorrectly decoded if: a) the SIC process is not successful, or b) the data stream of $e$ is erroneously decoded after the SIC.

a) SIC fails if $r_u > I_{u,e}$. The probability of this event, denoted as  $\mc{O}_{u,e}$, is
\begin{equation} \label{eq:outage:sic}
    P(\mc{O}_{u,e}) = \Pr\{ I_{u,e} < r_u \}.
\end{equation}

b) Assuming that SIC has been successful, the data stream of $e$ is wrongly decoded if $r_e > I_{e}$. The probability of this event,  denoted as $\mc{O}_{e}$
\begin{equation} \label{eq:outage:e}
    P(\mc{O}_{e}) = \Pr\{ I_{e} < r_e \}.
\end{equation}
It is important to note that the assumption of complete knowledge of the CSI for the eMBB user constraints both  $P(\mc{O}_{u,e})$ and $P(\mc{O}_{e})$ to be either 1 or 0.

\paragraph*{URLLC outages}
According to our model, the data stream intended for $u$ is not successfully decoded if: a) the URLLC packet is wrongly decoded at receiver $u$, or b) the URLLC packet is not entirely received before the latency requirement $l^{\max}$.

a) A $u$ packet is wrongly decoded if $r_u > I_u$. The probability of this event denoted as $\mc{O}_u$ is 
\begin{equation} \label{eq:outage:u}
    P(\mc{O}_u)= \Pr \left\{ I_u \le r_u \right\} ,
\end{equation}
In our case it must be $P(\mc{O}_u)\le \epsilon_u$, i.e. $P(\mc{O}_u)$ needs to be smaller than the URLLC reliability constraint $\epsilon_u$.

b) The latency requirement is violated if the transmission is not successful within $l^{\max}$ mini-slots. Thus, the probability of this event, denoted as $\mc{L}$, can be computed  as~\cite{Durisi2016} 
\begin{equation} \label{eq:outage:L}
    \begin{aligned}
        P(\mc{L}) 
        &= \Pr\{I_u \le r_u \cup \Delta_u + M_u > l^{\max} \} \\
        &\le \Pr\{I_u \le r_u\} + \Pr\{M_u > l^{\max} - \Delta_u\} \\
        &\le \epsilon_u + \Pr\{M_u > l^{\max} - \Delta_u \},
    \end{aligned}
\end{equation}
where $\Delta_u$ is the number of mini-slots between the arrival and the first transmission of the URLLC packet. 
Hence, the latency constraint can be always satisfied choosing a reasonable value of $M_u$, such as
\begin{equation} 
\label{eq:M}
    M_u \le l^{\max} - \Delta_u,
\end{equation}
provided that  $P(\mc{O}_u)$  fulfills the reliability requirement.

\section{Power minimization problem}
In order to formulate the optimization process, we further denote with $P^\text{tot} = \sum_{t\in\mc{M}}\sum_{f\in\mc{F}} P_e(t, f)+ P_u(t,f)$ the overall power spent, with $\mb{P}_u$ the vector collecting all the URLLC power coefficients, and with $\mb{P}_e$ the vector collecting all the eMBB power coefficients. 
Putting together the SIC~\eqref{eq:outage:sic} and the rate~\eqref{eq:outage:e} requirements of the eMBB, and the reliability~\eqref{eq:outage:u} requirement of the URLLC user the resource allocation process can be formalized as follows
\begin{align}
\label{op:noma}
    \min_{\mb{P}_e, \mb{P}_u}& P^\text{tot} \\
    \text{s.t. } & P(\mc{O}_{e}) = 0, \tag{\ref{op:noma}.a} \label{op:noma:e}\\
    & P(\mc{O}_{u,e}) = 0, \tag{\ref{op:noma}.b} \label{op:noma:ue}\\
    & P(\mc{O}_u) \le \epsilon_u, \tag{\ref{op:noma}.c} \label{op:noma:u}\\
    & \mb{P}_u \succeq 0, \mb{P}_e \succeq 0. \tag{\ref{op:noma}.d}
\end{align}
For the OMA case, we remark that the problem is almost identical, except that eq.~\eqref{op:noma:ue} is substituted by a constraint on the orthogonality of resources.
Problem~\eqref{op:noma} is not convex with respect to both power coefficients. Moreover, the evaluation of~~\eqref{op:noma:u} is not known: only bound or approximations can be used, e.g. see~\cite{Li2020}. Unfortunately, the approximations make use of non-convex special functions, hard to be optimized.
In the following, we describe a simplified solution to the problem.

The general idea is to split the problem for $e$ and then for $u$, and solve simpler convex problems: we minimize the eMBB power subject to constraint~\eqref{op:noma:e}; we found the minimum URLLC power guaranteeing relation~\eqref{op:noma:ue}; we evaluate the URLLC power allocation constrained to~\eqref{op:noma:u}.
Having considered a single coherence time, the power analysis can be made assuming that $P_u(t,f)$ and $P_e(t,f)$ will not change on different mini-slots.
Moreover, given~\eqref{eq:M}, the time resources of URLLC transmissions are already defined. Thus, we can impose $P_i(t,f) = P_i(f)$, $\forall t \in \mc{M}_i$, $i \in\{e,u\}$.

\subsection{eMBB requirements}
For the eMBB traffic, we exploit the knowledge of the CSI to allocate the power satisfying a minimum rate requirement, i.e. guaranteeing a correct decoding process.
Following the definition~\eqref{eq:info:e}, we can obtain the minimum power spent that guarantees that $P(\mc{O}_{e}) = 0$, solving the following problem
\begin{equation}
\label{eq:op:e}
\begin{aligned}
    \min_{\mb{P}_e \succeq 0} &\sum_{f\in\mc{F}_e} P_e(f) \\
    \text{s.t. } &\frac{1}{F_e}\sum_{f\in\mc{F}_e} \log_2\left( 1 + \gamma_e(f) P_e(f)\right) \ge \bar{r}_e.
\end{aligned}
\end{equation}
where $\bar{r}_e = r_e / M_e / F_e$ is the average target rate obtained when the informative bits are spread onto the resources available for $e$.
The solution of problem~\eqref{eq:op:e} can be computed through the well-known water-filling approach, obtaining the desired power coefficients $\mb{P}_e$.

The decoding process relies on successful interference cancellation, which now depends only on the URLLC power coefficients. Hence, employing~\eqref{eq:info:ue} and~\eqref{eq:outage:sic}, we can obtain the minimum power to guarantee that $P(\mc{O}_{u,e}) = 0$ by solving the following problem
\begin{equation}
\label{eq:op:sic}
\begin{aligned}
    \min_{\mb{P}_u \succeq 0} &\sum_{f\in\mc{F}_{u}} P_u(f) \\
    \text{s.t. } &\frac{1}{F_u}\sum_{f\in\mc{F}_u} \log_2\left( 1 + \frac{\gamma_e(f) P_u(f)}{1 + \gamma_e(f) P_e(f)}\right) \ge \bar{r}_u.
\end{aligned}
\end{equation}
where $\bar{r}_u = r_u / M_u / F_u$ is the spectral efficiency obtained when the informative bits are spread onto the resources available for $u$. Also in this case, a water-filling approach is employed to obtain the solution, labeled as $\mb{P}_u^\text{SIC}$ to highlight that this is the minimum power needed for the SIC constraint. 
Therefore, the requirements of the eMBB traffic are satisfied if power coefficients $\mb{P}_e$, and $\mb{P}_u^\text{SIC}$ are used for the transmission of $e$ and $u$ data streams, respectively.

\subsection{URLLC requirements}
For the URLLC user, having solved the latency requirement with ~\eqref{eq:M}, we only need to focus on the reliability problem. Thus,~\eqref{eq:outage:u} can be rewritten as
\begin{equation}
    \label{eq:outage:ufull}
    \Pr \left\{ \frac{1}{F_u} \sum_{f\in\mc{F}_u} \log_2\left( 1 + \frac{\gamma_u(f)P_u(f)}{1 + \gamma_u(f) P_e(f)} \right) \le \bar{r}_u \right\} \le \epsilon_u.
\end{equation}
In the case of a single frequency resource, i.e. $F_u = 1$, the solution of~\eqref{eq:outage:ufull} is known~\cite{Wang2018}, but for $F_u > 1$, a closed form solution does not exist. 
An empirical approach to evaluate~\eqref{eq:outage:ufull} regardless of the number of frequency resources involved is outlined in the following Section. Moreover, two algorithms able to solve the allocation problem based on the evaluation of~\eqref{eq:outage:ufull} are proposed. 


\section{Simplified solutions for URLLC power allocation}
The outage probability~\eqref{eq:outage:ufull} depends on $r_u$, $\rho_u$, $F_u$, $\mb{P}_e$ and $\mb{P}_u$. In theory, we can tabulate the outage probability as a function of these parameters. However, the variation of each element of $\mb{P}_e$ and $\mb{P}_u$ cannot be done in practice due to the huge number of trials to be done. Thus, we tabulate the outage, assuming the same $P_u$ and $P_e$ are used for all the considered FRs.
Let us denote as $\hat{O}_u(P_u, P_e, \rho_u, F_u, r_u )$ the Monte Carlo estimation of the outage probability $P(\mc{O}_u)$ when $P_u(f) = P_u$ and $P_e(f) = P_e$ for every $f\in\mc{F}_u$. 
In the following proposed algorithms, we use this table to solve the URLLC power allocation problem.

\subsection{Feasible algorithm}
Let us now consider to have computed the eMBB power coefficients $P_e(f)$ through solution of~\eqref{eq:op:e}. 
Given $\rho_u$, $r_u$ and $F_u$, we select the resource of the worst interference seen, i.e. $f^* = \argmax_f P_e(f)$. Then, we extract from table $\hat{\mc{O}}_u$ the minimum $P_u$ that guarantees $\hat{O}_u(P_u, P_e(f^*), \rho_u, F_u, r_u ) \le \epsilon_u$. In other words, we extract the URLLC power assuming that all channels experience the strongest interference. Finally, we set:
\begin{equation}
    \label{eq:alg:worst}
    P_u^*(f) = \max\left\{P_u, P_u^\text{SIC}(f)\right\}, \quad \forall f\in\mc{F}_u.
\end{equation}
In the following, we prove that the algorithm summarized in Table~\ref{alg:N-fea} provides a feasible solution to the power allocation problem~\eqref{op:noma}.
\begin{proposition}
Algorithm~\ref{alg:N-fea} guarantees both URLLC reliability requirement~\eqref{eq:outage:u} and eMBB SIC requirement~\eqref{eq:outage:sic}.
\end{proposition}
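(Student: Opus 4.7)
The plan is to verify each of the two requirements separately via the monotonicity of the mutual-information expressions in the relevant power coefficients.

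For the SIC requirement, I would start by observing that $I_{u,e}$ defined in \eqref{eq:info:ue} is nondecreasing in each $P_u(f)$ when the $P_e(f)$ are held fixed, because each per-subcarrier log term is increasing in $P_u(f)$ while the others are unaffected. By construction, $\mb{P}_u^{\text{SIC}}$ is the minimum vector of URLLC powers that makes the constraint in \eqref{eq:op:sic} hold, i.e., $I_{u,e}\ge r_u$. Since Algorithm~\ref{alg:N-fea} sets $P_u^*(f) = \max\{P_u, P_u^{\text{SIC}}(f)\} \ge P_u^{\text{SIC}}(f)$ for every $f\in\mc{F}_u$, the mutual information evaluated at $\mb{P}_u^*$ is at least as large as the one evaluated at $\mb{P}_u^{\text{SIC}}$, so $I_{u,e}\ge r_u$ and $P(\mc{O}_{u,e})=0$ by \eqref{eq:outage:sic}.

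For the URLLC reliability, the same monotonicity idea is carried through inside the probability. The per-subcarrier log term in \eqref{eq:info:u} is nondecreasing in $P_u(f)$ and nonincreasing in $P_e(f)$ for every realization of $\gamma_u(f)$. I would then introduce a companion allocation in which $P_u(f)=P_u$ and $P_e(f)=P_e(f^*)$ for all $f\in\mc{F}_u$, and denote the associated mutual information by $\tilde I_u$. By the choice $f^*=\argmax_f P_e(f)$ we have $P_e(f)\le P_e(f^*)$ for all $f$, and by \eqref{eq:alg:worst} we have $P_u^*(f)\ge P_u$ for all $f$. Realization-wise this gives $I_u\ge \tilde I_u$, hence $\{I_u< r_u\}\subseteq \{\tilde I_u< r_u\}$, so
\begin{equation*}
P(\mc{O}_u)=\Pr\{I_u\le r_u\}\le \Pr\{\tilde I_u\le r_u\}.
\end{equation*}
The right-hand side is precisely the quantity that the tabulated value $\hat{O}_u(P_u,P_e(f^*),\rho_u,F_u,r_u)$ estimates, and by construction $P_u$ was chosen as the smallest value for which this tabulated quantity is at most $\epsilon_u$, delivering \eqref{op:noma:u}.

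The only subtle point is conceptual rather than technical: the algorithm uses a Monte Carlo estimate $\hat{O}_u$ in place of the true outage probability, so the inequality $P(\mc{O}_u)\le\epsilon_u$ holds up to the Monte Carlo error. I would handle this by stating explicitly, in line with the empirical approach adopted in Section~IV, that $\hat{O}_u$ is taken as the operative proxy for $P(\mc{O}_u)$; the rest of the argument is a deterministic, pointwise monotonicity comparison and requires no further analytic work.
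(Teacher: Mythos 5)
Your proposal is correct and follows essentially the same argument as the paper: a pointwise monotonicity comparison against the worst-case (maximum-interference, tabulated) allocation for the reliability constraint, and the observation that the $\max$ operator can only increase the relevant mutual information for the SIC constraint. Your explicit acknowledgment that $\hat{O}_u$ is a Monte Carlo proxy for the true outage probability is a caveat the paper leaves implicit, but it does not change the structure of the argument.
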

\begin{proof}
Power coefficient $P_u$ is the minimum power ensuring $\hat{\mc{O}}_u(P_u, P_e(f^*), \rho_u, F_u, r_u ) \le \epsilon_u$. In other words, a transmission employing power coefficients $P_u$ and interference coefficients $P_e(f^*) \ge P_e(f)$, $\forall f\in\mc{F}_u$ has outage probability $P(\mc{O}_u) \le \epsilon_u$. Thus, the actual mutual information evaluated with the same power coefficients $\mb{P}_u$ but smaller (or equal) interference coefficients $\mb{P}_e$ can only be greater (or equal). Therefore, the actual outage probability~\eqref{eq:outage:u} can be only lower (or equal), satisfying the reliability constraint. Furthermore, operator~\eqref{eq:alg:worst} assures that the power coefficient is increased to meet the minimum power needed for the SIC, if needed. Again, this operation can only increase the mutual information~\eqref{eq:info:u}, satisfying at the same time the SIC requirement.
\end{proof}

\begin{algorithm}
\footnotesize
\caption{Feasible algorithm (N-fea)}
\label{alg:N-fea}
\textbf{Initialize:}
Having populated the table $\hat{O}_u(P_u, P_e, \rho_u, F_u, r_u ) $\; 
Compute $\mb{P}_e$ solving~\eqref{eq:op:e} through water-filling approach\;
Compute $\mb{P}_u^\text{SIC}$ solving~\eqref{eq:op:sic} through water-filling approach\;
$f^* = \argmax_f P_e(f)$\;
$P_u = \min \{P_u \,|\, \hat{O}_u(P_u, P_e(f^*), \rho_u, F_u, r_u ) \le \epsilon_u\}$\;
\For {$f\in\mc{F}_u$}{
$P_u^*(f) = \max\left\{P_u, P_u^\text{SIC}(f)\right\}$\;
}
\textbf{Output:} $\mb{P}_u^*$
\end{algorithm}

\subsection{Heuristic algorithm}
The feasible algorithm is sub-optimal in terms of power spent, considering that we constraint all channels to act as the worst one. 
To overcome this limitation, we propose a different heuristic approach. Given $\rho_u$, $r_u$ and $F_u$, we perform the allocation per FR: we extract from table $\hat{\mc{O}}_u$ the minimum $P_u(f)$ that guarantees $\hat{O}_u(P_u(f), P_e(f), \rho_u, F_u, r_u ) \le \epsilon_u$. In other words, we deploy the power for channel $f\in\mc{F}_u$ assuming that all other channels have inference power $P_e(f)$.
Finally, for each FR, we set:
\begin{equation}
    \label{eq:alg:all}
    P_u^*(f) = \max\left\{P_u(f), P_u^\text{SIC}(f)\right\}, \quad \forall f\in\mc{F}_u.
\end{equation}
Even if this scheme is still sub-optimal, it saves power with respect to the previous algorithm. Note that there is no theoretical guarantee that this heuristic will satisfy the reliability constraint. However, experimental results will show that using this scheme the reliability constraint is always verified for the parameters of interest.

\begin{algorithm}
\footnotesize
\caption{Heuristic algorithm (N-heu)}
\label{alg:N-heu}
\textbf{Initialize:}
Having populated the table $\hat{O}_u(P_u, P_e, \rho_u, F_u, r_u ) $\; 
Compute $\mb{P}_e$ solving~\eqref{eq:op:e} through water-filling approach\;
Compute $\mb{P}_u^\text{SIC}$ solving~\eqref{eq:op:sic} through water-filling approach\;
\For {$f\in\mc{F}_u$}{
$P_u(f) = \min \{P_u(f) \,|\, \hat{O}_u(P_u(f), P_e(f), \rho_u, F_u, r_u ) \le \epsilon_u\}$\;
$P_u^*(f) = \max\left\{P_u(f), P_u^\text{SIC}(f)\right\}$\;
}
\textbf{Output:}  $\mb{P}_u^*$
\end{algorithm}
\normalsize

\section{Numerical simulations}
\label{sec:results}
In this section, a comparison in terms of power consumption of using NOMA and OMA for eMBB and URLLC traffics is presented.

Increasing the number of mini-slots linearly reduce the average target rate on each resources, while increasing the number of frequency resources gives a not negligible diversity gain. Thus, without loss of generality, we consider a resource grid formed by a single mini-slot $M = 1$ and $F = 12$ FRs. Furthermore, we set $r_u = 1$ and $r_e = 6$ bit/s/Hz; the noise power of both user is $\sigma_i^2 = - 92$ dBm; the reliability requirement is set as $\epsilon_u = 10^{-5}$.
Each instance of simulation is made placing the users in a cell of 100 m of radius and computing the power consumption. 
The positioning of the two users is made as follows: the URLLC user $u$ is placed at a fixed distance $d_u$ [m] from the BS; the eMBB user is then randomly positioned following a uniform distribution in the cell. Then, we compute the path loss assuming no shadowing and setting the sum of antenna gains to 17.15 dB and the path loss exponent to 4. Finally, the results are averaged for all the different positions of the eMBB user.

With respect to the OMA paradigm, we apply the same procedure described for NOMA taking into account the few differences between the two schemes. Firstly, we ignore the SIC requirement, which is a trait of the NOMA only, i.e. we set $P_u^\text{SIC}(f) = 0$, $\forall f \in \mc{F}_u$. 
Then, we remark that, for OMA allocation, Algorithms~\ref{alg:N-fea} and~\ref{alg:N-heu} give the same results. Indeed, the interference power coefficient of $e$ user is zero for every resources given to $u$, i.e. $P_e(f) = 0$, $\forall f\in\mc{F}_u$.
Before comparing with NOMA, we studied the performance of OMA reserving various percentages of resources to the URLLC. For this setting, we found that the best performances of OMA are attained in two cases: a) reserving 50\% of the resources to URLLC for farther users, and b) reserving 25\% of the resources for URLLC, for closer users. We label the result of a) as OMA-50 and the results of b) as OMA-25. 
In the case of NOMA, we show the results for both  Algorithms~\ref{alg:N-fea} and~\ref{alg:N-heu}, labeled as N-fea and N-heu, respectively. 

Fig.~\ref{fig:tot_power_re6} shows the total power spent $P^\text{tot}$ as a function of $d_u$, while the corresponding estimated $P(\mc{O}_u)$ is presented in Fig.~\ref{fig:outage_re6}. $P(\mc{O}_u)$ is evaluated by Monte Carlo simulation keeping fixed the allocated power coefficients of each instance while the channel gains of $u$ are randomly generated with the corresponding path loss given by the distance $d_u$. Moreover, the tabulated outage N-tab for the NOMA paradigm used for the evaluation of power is also plotted in Fig.~\ref{fig:outage_re6}, to show the difference between the tabulated and estimated outage in the NOMA case. The tabulated outage for the OMA case is not presented because identical to the estimated one. 
We can see that NOMA assures a lower power consumption with respect to the OMA paradigms, guaranteeing at the same time the reliability requirement. As already stated, the percentage of resources to use in case OMA depends on $d_u$. In particular, OMA-25 obtain better performance if $d_u < 29$ m, while better performance is given by OMA-50 for the other distances presented. 

\begin{figure}[tb]
    \centering
    \begin{subfigure}[c]{\columnwidth}
        \centering
\begin{tikzpicture}

\definecolor{color0}{rgb}{1,0.270588235294118,0}
\definecolor{color1}{rgb}{1,0.549019607843137,0}
\definecolor{color2}{rgb}{1,0.843137254901961,0}

\begin{axis}[
legend cell align={left},
legend style={
  fill opacity=0.8,
  draw opacity=1,
  text opacity=1,
  at={(0.03,0.97)},
  anchor=north west,
  draw=white!80!black
},
tick align=outside,
tick pos=left,
x grid style={white!69.0196078431373!black},
xlabel={\(\displaystyle d_u\) [m]},
xmajorgrids,
xmin=15.7374251466673, xmax=76.3826446482485,
xtick style={color=black},
xtick={10,20,30,40,50,60,70,80},
xticklabels={
  \(\displaystyle {10}\),
  \(\displaystyle {20}\),
  \(\displaystyle {30}\),
  \(\displaystyle {40}\),
  \(\displaystyle {50}\),
  \(\displaystyle {60}\),
  \(\displaystyle {70}\),
  \(\displaystyle {80}\)
},
y grid style={white!69.0196078431373!black},
ylabel={\(\displaystyle P^{tot}\) [dBm]},
ymajorgrids,
ymin=7.9, ymax=16.3,
xmin=25, xmax=71,
ytick style={color=black},
ytick={6,8,10,12,14,16,18},
yticklabels={
  \(\displaystyle {6}\),
  \(\displaystyle {8}\),
  \(\displaystyle {10}\),
  \(\displaystyle {12}\),
  \(\displaystyle {14}\),
  \(\displaystyle {16}\),
  \(\displaystyle {18}\)
}
]
\addplot [semithick, blue, mark=x]
table {%
73.626043761813 12.0892869866435
69.5074673912429 11.5491906562891
65.619280573792 11.0651624989143
61.9485955197459 10.638984385619
58.4832453710536 10.2555342518559
55.2117438730415 9.92881212885444
52.1232473020575 9.64081916745454
49.2075185228483 9.40648975544556
46.454893056536 9.1917585826807
43.8562470467222 9.01605463478315
41.4029670175405 8.87419951892167
39.0869213234165 8.75119543624051
36.9004331959035 8.64981837673711
34.8362552982549 8.58497940474448
32.8875457033906 8.52330092606585
31.0478452156363 8.47122449653346
29.3110559610633 8.4333389969812
27.6714211754672 8.40577023739167
26.1235061229884 8.38337469235995
24.6621800821291 8.35656186820285
23.2825993394561 8.34039435343495
21.9801911346209 8.32958247595836
20.7506385034823 8.31657259270518
19.58986596909 8.29589357278671
18.4940260331029 8.28059314602966
};
\addlegendentry{N-fea}
\addplot [semithick, green!50!black, mark=x, dashed, mark options={solid}]
table {%
73.626043761813 11.7271832696498
69.5074673912429 11.1426461688773
65.619280573792 10.604072875891
61.9485955197459 10.1311411889958
58.4832453710536 9.69154852991264
55.2117438730415 9.34509178415764
52.1232473020575 9.00363639619399
49.2075185228483 8.72524868106957
46.454893056536 8.50505848345739
43.8562470467222 8.30052769960881
41.4029670175405 8.16390133235985
39.0869213234165 8.083640789819
36.9004331959035 8.083640789819
34.8362552982549 8.083640789819
32.8875457033906 8.083640789819
31.0478452156363 8.083640789819
29.3110559610633 8.083640789819
27.6714211754672 8.083640789819
26.1235061229884 8.083640789819
24.6621800821291 8.083640789819
23.2825993394561 8.083640789819
21.9801911346209 8.083640789819
20.7506385034823 8.083640789819
19.58986596909 8.083640789819
18.4940260331029 8.083640789819
};
\addlegendentry{N-heu}
\addplot [semithick, color0, mark=star]
table {%
73.626043761813 16.8617940646314
69.5074673912429 16.8617940646314
65.619280573792 16.8617940646314
61.9485955197459 16.8617940646314
58.4832453710536 16.8617940646314
55.2117438730415 16.8617940646314
52.1232473020575 16.6796847423516
49.2075185228483 15.7475446893988
46.454893056536 14.988488122807
43.8562470467222 14.189099833085
41.4029670175405 13.4064347118361
39.0869213234165 12.7642362332738
36.9004331959035 12.1134454052999
34.8362552982549 11.4806637636173
32.8875457033906 10.9667154450485
31.0478452156363 10.5165409981372
29.3110559610633 10.0508178491626
27.6714211754672 9.69031183668244
26.1235061229884 9.4041127103242
24.6621800821291 9.14283122764043
23.2825993394561 8.91240749752166
21.9801911346209 8.7188403156318
20.7506385034823 8.56778423606622
19.58986596909 8.43776232813922
18.4940260331029 8.33086727032591
};
\addlegendentry{OMA-25}
\addplot [semithick, color1, mark=star, dashed]
table {%
73.626043761813 15.3065107486805
69.5074673912429 14.5493137451919
65.619280573792 13.924860657418
61.9485955197459 13.2999474954759
58.4832453710536 12.7494832463765
55.2117438730415 12.2526170261953
52.1232473020575 11.8145068440499
49.2075185228483 11.4344258907247
46.454893056536 11.1044335360042
43.8562470467222 10.8088909618352
41.4029670175405 10.589352975221
39.0869213234165 10.3919197265518
36.9004331959035 10.2202461283559
34.8362552982549 10.0778474589315
32.8875457033906 9.9678981639467
31.0478452156363 9.89302111386523
29.3110559610633 9.81683042099939
27.6714211754672 9.73927916791339
26.1235061229884 9.69997797484906
24.6621800821291 9.69997797484906
23.2825993394561 9.69997797484906
21.9801911346209 9.69997797484906
20.7506385034823 9.69997797484906
19.58986596909 9.69997797484906
18.4940260331029 9.69997797484906
};
\addlegendentry{OMA-50}
\end{axis}

\end{tikzpicture}
        \caption{Total power spent.}
        \label{fig:tot_power_re6}
    \end{subfigure}
    ~
    \begin{subfigure}[c]{\columnwidth}
        \centering
\begin{tikzpicture}

\definecolor{color0}{rgb}{0,0,0}
\definecolor{color1}{rgb}{1,0.270588235294118,0}
\definecolor{color2}{rgb}{1,0.549019607843137,0}
\definecolor{color3}{rgb}{1,0.843137254901961,0}

\begin{axis}[
legend cell align={left},
legend style={
  fill opacity=0.8,
  draw opacity=1,
  text opacity=1,
  at={(1,0)},
  anchor=south east,
  draw=white!80!black
},
log basis y={10},
tick align=outside,
tick pos=left,
x grid style={white!69.0196078431373!black},
xlabel={\(\displaystyle d_u\) [m]},
xmajorgrids,
xmin=25, xmax=71,
xtick style={color=black},
y grid style={white!69.0196078431373!black},
ylabel={\(\displaystyle P(\mathcal{O}_u)\)},
ymajorgrids,
ymin=1e-08, ymax=1e-4,
ymode=log,
ytick style={color=black}
]
\addplot [semithick, blue, mark=x]
table {%
73.626043761813 3.865e-06
69.5074673912429 3.448e-06
65.619280573792 2.974e-06
61.9485955197459 2.505e-06
58.4832453710536 2.222e-06
55.2117438730415 1.774e-06
52.1232473020575 1.656e-06
49.2075185228483 1.244e-06
46.454893056536 1.115e-06
43.8562470467222 8.19e-07
41.4029670175405 6.19e-07
39.0869213234165 3.93e-07
36.9004331959035 7.7e-08
34.8362552982549 1.3e-08
32.8875457033906 1e-09
31.0478452156363 0
29.3110559610633 0
27.6714211754672 0
26.1235061229884 0
24.6621800821291 0
23.2825993394561 0
21.9801911346209 0
20.7506385034823 0
19.58986596909 0
18.4940260331029 0
};
\addlegendentry{N-fea}
\addplot [semithick, green!50!black, mark=x, dashed, mark options={solid}]
table {%
73.626043761813 7.749e-06
69.5074673912429 7.426e-06
65.619280573792 7.364e-06
61.9485955197459 7.086e-06
58.4832453710536 7.683e-06
55.2117438730415 6.46e-06
52.1232473020575 7.646e-06
49.2075185228483 7.914e-06
46.454893056536 7.112e-06
43.8562470467222 8.062e-06
41.4029670175405 5.533e-06
39.0869213234165 4.393e-06
36.9004331959035 1.213e-06
34.8362552982549 3.03e-07
32.8875457033906 8.8e-08
31.0478452156363 2.9e-08
29.3110559610633 1e-08
27.6714211754672 2e-09
26.1235061229884 0
24.6621800821291 0
23.2825993394561 0
21.9801911346209 0
20.7506385034823 0
19.58986596909 0
18.4940260331029 0
};
\addlegendentry{N-heu}
\addplot [semithick, color0]
table {%
73.626043761813 8.58901338833537e-06
69.5074673912429 8.12905527345437e-06
65.619280573792 7.9340026203403e-06
61.9485955197459 7.83047637203218e-06
58.4832453710536 7.82441578840617e-06
55.2117438730415 7.46760794003592e-06
52.1232473020575 7.32100678497161e-06
49.2075185228483 7.21620252197623e-06
46.454893056536 6.22703596960208e-06
43.8562470467222 7.19104577580345e-06
41.4029670175405 4.92900598577172e-06
39.0869213234165 3.31855544313899e-06
36.9004331959035 1.71547971263618e-06
34.8362552982549 1.1439704466716e-06
32.8875457033906 1.03702731070895e-06
31.0478452156363 8.26091881745322e-07
29.3110559610633 6.86581353447364e-07
27.6714211754672 6.10908092974009e-07
26.1235061229884 5.82379432735472e-07
24.6621800821291 5.4446753965916e-07
23.2825993394561 4.877904569877e-07
21.9801911346209 3.61285553820648e-07
20.7506385034823 3.54080071910692e-07
19.58986596909 3.38273611368627e-07
18.4940260331029 3.356527464228e-07
};
\addlegendentry{N-tab}
\addplot [semithick, color1, mark=star]
table {%
73.626043761813 1e-05
69.5074673912429 1e-05
65.619280573792 1e-05
61.9485955197459 1e-05
58.4832453710536 1e-05
55.2117438730415 1e-05
52.1232473020575 1e-05
49.2075185228483 1.0e-05
46.454893056536 9.821e-06
43.8562470467222 9.719e-06
41.4029670175405 1.0159e-05
39.0869213234165 9.487e-06
36.9004331959035 9.629e-06
34.8362552982549 9.944e-06
32.8875457033906 9.888e-06
31.0478452156363 9.459e-06
29.3110559610633 1.0072e-05
27.6714211754672 1.006e-05
26.1235061229884 9.49e-06
24.6621800821291 9.123e-06
23.2825993394561 9.292e-06
21.9801911346209 9.476e-06
20.7506385034823 9.323e-06
19.58986596909 9.38e-06
18.4940260331029 9.735e-06
};
\addlegendentry{OMA-25}
\addplot [semithick, color2, mark=star, dashed]
table {%
73.626043761813 9.07e-06
69.5074673912429 9.645e-06
65.619280573792 8.993e-06
61.9485955197459 9.294e-06
58.4832453710536 9.268e-06
55.2117438730415 9.119e-06
52.1232473020575 9.107e-06
49.2075185228483 8.93e-06
46.454893056536 8.822e-06
43.8562470467222 9.186e-06
41.4029670175405 8.056e-06
39.0869213234165 8.055e-06
36.9004331959035 8.044e-06
34.8362552982549 8.406e-06
32.8875457033906 8.047e-06
31.0478452156363 6.199e-06
29.3110559610633 5.827e-06
27.6714211754672 6.233e-06
26.1235061229884 6.24e-06
24.6621800821291 1.727e-06
23.2825993394561 4.41e-07
21.9801911346209 1.15e-07
20.7506385034823 3.9e-08
19.58986596909 8e-09
18.4940260331029 1e-09
};
\addlegendentry{OMA-50}
\end{axis}

\end{tikzpicture}
        \caption{Tabulated and estimated outage.}
        \label{fig:outage_re6}
    \end{subfigure}
    \caption{Average results obtained as a function of URLLC distance $d_u$.}
    \label{fig:power-outage6}
\end{figure}
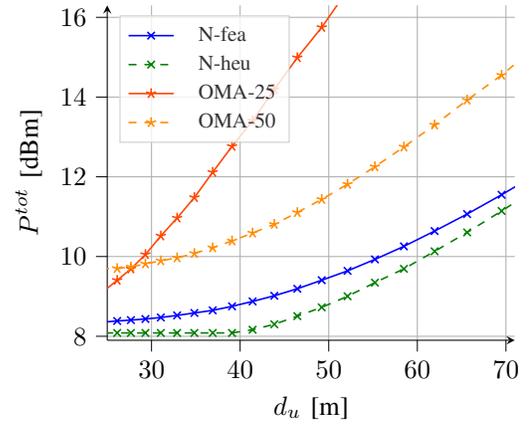
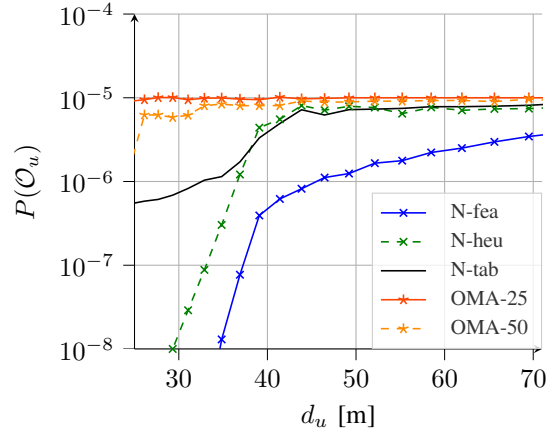
\begin{table}[bth]
    \centering
    \caption{Average eMBB power spent in dBm.}
    \begin{tabular}{rrr} 
\toprule
  NOMA &   OMA-25 &   OMA-50 \\ 
\midrule
   7.02 &  7.90 &  9.50 \\ 
\bottomrule
\end{tabular}

    \label{tab:embb_power}
\end{table}

To fully understand these results, we present in Table~\ref{tab:embb_power} the average value of the total power spent for the eMBB user, which does not depends on $d_u$. The pre-allocation of resources leads to an increase in power: more resources are reserved for URLLC only, more power must be allocated to the eMBB to compensate for the frequency diversity gain loss.
In Fig.~\ref{fig:urllc_powers} we show the power spent for the URLLC user, i.e. $P_u^\text{tot} = \sum_{f\in\mc{F}_u} P_u(f)$. In particular, we show: the results of N-fea and N-heu before the application $\max$ operator in~\eqref{eq:alg:worst} and~\eqref{eq:alg:all}; the average minimum power needed to satisfy the SIC requirement, labeled as N-SIC; the power spent for the OMA allocations. Unsurprisingly, the power spent decreases with the increase of resources reserved for the URLLC. We can clearly see that the power consumption of N-heu is lower than the OMA when $d_u \ge 45$ m. However, for $d_u \le 40$ m the power needed by the SIC limits the power consumption for N-heu.

\begin{figure}[htb]
    \centering
\begin{tikzpicture}

\definecolor{color0}{rgb}{1,0.270588235294118,0}
\definecolor{color1}{rgb}{1,0.549019607843137,0}
\definecolor{color2}{rgb}{1,0.843137254901961,0}

\begin{axis}[
legend cell align={left},
legend style={
  fill opacity=0.8,
  draw opacity=1,
  text opacity=1,
  at={(1,0)},
  anchor=south east,
  draw=white!80!black
},
tick align=outside,
tick pos=left,
x grid style={white!69.0196078431373!black},
xlabel={\(\displaystyle d_u\) [m]},
xmajorgrids,
xmin=25, xmax=71,
xtick style={color=black},
xtick={10,20,30,40,50,60,70,80},
xticklabels={
  \(\displaystyle {10}\),
  \(\displaystyle {20}\),
  \(\displaystyle {30}\),
  \(\displaystyle {40}\),
  \(\displaystyle {50}\),
  \(\displaystyle {60}\),
  \(\displaystyle {70}\),
  \(\displaystyle {80}\)
},
y grid style={white!69.0196078431373!black},
ylabel={\(\displaystyle P_{u}^\text{tot}\) [dBm]},
ymajorgrids,
ymin=-6.4816448671918, ymax=16,
ytick style={color=black},
ytick={-10,-5,0,5,10,15,20},
yticklabels={
  \(\displaystyle {-10}\),
  \(\displaystyle {-5}\),
  \(\displaystyle {0}\),
  \(\displaystyle {5}\),
  \(\displaystyle {10}\),
  \(\displaystyle {15}\),
  \(\displaystyle {20}\)
}
]
\addplot [semithick, blue, mark=x]
table {%
73.626043761813 10.47048901933
69.5074673912429 9.66251255508438
65.619280573792 8.89056539691924
61.9485955197459 8.16315559445646
58.4832453710536 7.46018906269003
55.2117438730415 6.81555627186673
52.1232473020575 6.20374286560307
49.2075185228483 5.66875297571841
46.454893056536 5.14272040610058
43.8562470467222 4.68163645502884
41.4029670175405 4.28547627753773
39.0869213234165 3.92180164344591
36.9004331959035 3.60592449427184
34.8362552982549 3.39531620845357
32.8875457033906 3.1881412462672
31.0478452156363 3.00761048272649
29.3110559610633 2.87281861731403
27.6714211754672 2.77279952423132
26.1235061229884 2.69029876997702
24.6621800821291 2.58999535506777
23.2825993394561 2.52868217455281
21.9801911346209 2.48732048268639
20.7506385034823 2.43716079341186
19.58986596909 2.35653712818884
18.4940260331029 2.2961575782157
};
\addlegendentry{N-fea}
\addplot [semithick, green!50!black, mark=x, dashed, mark options={solid}]
table {%
73.626043761813 9.93436684959524
69.5074673912429 9.01767514210051
65.619280573792 8.10122576057528
61.9485955197459 7.22026895135252
58.4832453710536 6.31489972667329
55.2117438730415 5.52214033127735
52.1232473020575 4.64786353549604
49.2075185228483 3.84242577498408
46.454893056536 3.12550652836172
43.8562470467222 2.37470176579609
41.4029670175405 1.81357911401182
39.0869213234165 1.24900147511208
36.9004331959035 1.0162626970867
34.8362552982549 0.861432596166284
32.8875457033906 0.722769931798308
31.0478452156363 0.62550240852827
29.3110559610633 0.545005636884805
27.6714211754672 0.479516411745037
26.1235061229884 0.426961186895139
24.6621800821291 0.37714146304005
23.2825993394561 0.337551901569562
21.9801911346209 0.317620958745318
20.7506385034823 0.284964930639937
19.58986596909 0.25727337144992
18.4940260331029 0.229986548646358
};
\addlegendentry{N-heu}
\addplot [semithick, white!50!black, dashed]
table {%
73.626043761813 1.45607584651212
69.5074673912429 1.45607584651212
65.619280573792 1.45607584651212
61.9485955197459 1.45607584651212
58.4832453710536 1.45607584651212
55.2117438730415 1.45607584651212
52.1232473020575 1.45607584651212
49.2075185228483 1.45607584651212
46.454893056536 1.45607584651212
43.8562470467222 1.45607584651212
41.4029670175405 1.45607584651212
39.0869213234165 1.45607584651212
36.9004331959035 1.45607584651212
34.8362552982549 1.45607584651212
32.8875457033906 1.45607584651212
31.0478452156363 1.45607584651212
29.3110559610633 1.45607584651212
27.6714211754672 1.45607584651212
26.1235061229884 1.45607584651212
24.6621800821291 1.45607584651212
23.2825993394561 1.45607584651212
21.9801911346209 1.45607584651212
20.7506385034823 1.45607584651212
19.58986596909 1.45607584651212
18.4940260331029 1.45607584651212
};
\addlegendentry{N-SIC}
\addplot [semithick, color0, mark=star]
table {%
73.626043761813 16.2712125471966
69.5074673912429 16.2712125471966
65.619280573792 16.2712125471966
61.9485955197459 16.2712125471966
58.4832453710536 16.2712125471966
55.2117438730415 16.2712125471966
52.1232473020575 16.0619271487815
49.2075185228483 14.9678748622465
46.454893056536 14.0425695048209
43.8562470467222 13.0241690723149
41.4029670175405 11.9692967610872
39.0869213234165 11.0467702121334
36.9004331959035 10.0413272355019
34.8362552982549 8.97068710711596
32.8875457033906 8.0074203490595
31.0478452156363 7.06737012541293
29.3110559610633 5.96038715047598
27.6714211754672 4.96787486224629
26.1235061229884 4.05707016877311
24.6621800821291 3.0879700386923
23.2825993394561 2.07339363110432
21.9801911346209 1.04677021213279
20.7506385034823 0.0776700820524826
19.58986596909 -0.936906325536297
18.4940260331029 -1.96352974450702
};
\addlegendentry{OMA-25}
\addplot [semithick, color1, mark=star, dashed]
table {%
73.626043761813 13.9840777813582
69.5074673912429 12.9219774204979
65.619280573792 11.9809870637558
61.9485955197459 10.9590309690583
58.4832453710536 9.96771623903802
55.2117438730415 8.97068710711579
52.1232473020575 7.9781748188861
49.2075185228483 6.99437773799787
46.454893056536 6.00683620093333
43.8562470467222 4.96787486224587
41.4029670175405 4.0570701687726
39.0869213234165 3.0879700386923
36.9004331959035 2.07339363110351
34.8362552982549 1.04677021213279
32.8875457033906 0.0776700820512019
31.0478452156363 -0.714142378426069
29.3110559610633 -1.68324250850433
27.6714211754672 -2.93262987458695
26.1235061229884 -3.7244423350632
};
\addlegendentry{OMA-50}
\end{axis}

\end{tikzpicture}
    \caption{Average URLLC power spent as a function of the distance $d_u$.}
    \label{fig:urllc_powers}
\end{figure}
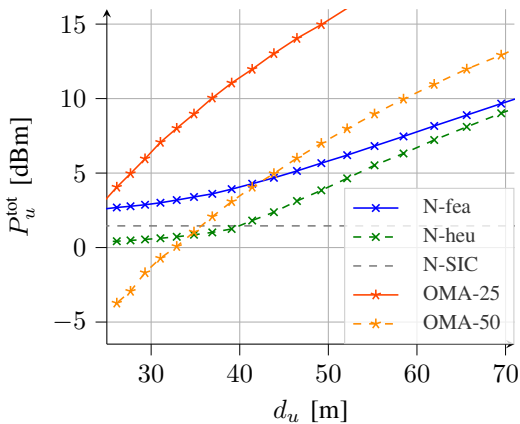

\section{Conclusions}
\label{sec:conclusions}
In this paper, the comparison in terms of power consumption for OMA and NOMA paradigms applied on eMBB and URLLC users is studied. Our model assumes a perfect CSI knowledge for the eMBB user and a statistical CSI knowledge of the URLLC user. Under these assumptions, the general power allocation process of both traffics can be split into several sub-problems. Using the knowledge of the eMBB CSI, we provide a practical solution for the minimum power coefficients.
Numerical results prove that the NOMA approach assures a lower power consumption with respect to the OMA paradigm, guaranteeing at the same time the minimum requirements of both users.

\bibliographystyle{IEEEtranNoUrl}
\bibliography{IEEEabrv, slicing.bib}
\end{document}